\newcommand{\abs}[1]{\left\vert #1\right\vert}
\newcommand{\amax}{a_\textup{max}}
\renewcommand{\d}{\textup{d}}
\newcommand{\Lip}[1]{\operatorname{Lip}(#1)}
\newcommand{\cM}{\mathcal{M}_+^N(\R^2)}
\newcommand{\norm}[2]{\left\Vert #1\right\Vert_{#2}}
\newcommand{\R}{\mathbb{R}}
\newcommand{\unit}[1]{~\mathrm{#1}}
\newcommand{\vd}{v_\d}
\newcommand{\vmax}{v_\textup{max}}
\theoremstyle{remark}\newtheorem*{remark}{Remark}
\theoremstyle{plain}
	 \newtheorem{proposition}{Proposition}[section]
	 \newtheorem{theorem}[proposition]{Theorem}
\theoremstyle{definition}\newtheorem{assumption}[proposition]{Assumption}
\title{Moving in a crowd: human perception as a multiscale process}
\author{Annachiara Colombi}
\address{Department of Mathematical Sciences ``G. L. Lagrange'', Politecnico di Torino, Corso Duca degli Abruzzi 24, 10129 Torino, Italy}
\email{annachiara.colombi@polito.it}
\author{Marco Scianna}
\address{Department of Mathematical Sciences ``G. L. Lagrange'', Politecnico di Torino, Corso Duca degli Abruzzi 24, 10129 Torino, Italy}
\email{marco.scianna@polito.it}
\author{Andrea Tosin}
\address{Department of Mathematical Sciences ``G. L. Lagrange'', Politecnico di Torino, Corso Duca degli Abruzzi 24, 10129 Torino, Italy}
\email{andrea.tosin@polito.it}
\begin{document}

\subjclass[2010]{37N99, 82C22, 90B20}
\keywords{Pedestrian perception, use of space, multiscale model, measure theory}

\begin{abstract}
The strategic behaviour of pedestrians is largely determined by how they perceive and react to neighbouring people. This issue is addressed in this paper by a model which combines, in a time and space-dependent way, discrete and continuous effects of pedestrian interactions. Numerical simulations and qualitative analysis suggest that human perception, and its impact on crowd dynamics, can be effectively modelled as a multiscale process based on a dual microscopic/macroscopic representation of groups of agents.
\end{abstract}

\maketitle

\section{Introduction and motivations}
In this paper we aim at incorporating the effect of pedestrian perception in a mathematical description of interpersonal interactions.

We take inspiration from~\cite{fujiyama2005WP,fujiyama2006WP}, where the author points out that different perceptions of the surroundings can lead walkers to react in a more individualistic or group-oriented way to the presence of nearby people. In particular, he introduces the concept of the \emph{use of space} as an indicator of such a behaviour, implying that this affects the pedestrian collision avoidance mechanism.

We start by the celebrated \emph{social force model}~\cite{helbing1995PRE} in the simple case of a single moving walker and we enrich it by introducing a multiscale micro/macroscopic representation of a group of individuals composing a static crowd that the walker interacts with. For this, we take advantage of the measure-theoretic multiscale approach developed in~\cite{cristiani2014BOOK}. The multiscale representation uses a \emph{perception function}, to be related to the aforesaid use of space, which determines how much the interactions of the walker are directed towards either the individual (viz. microscopic) or the collective (viz. macroscopic) distribution of the nearby people.

Our results show that such a multiscale interpretation of the effect of human perception can greatly impact on the correct reproduction of  pedestrian trajectories and that this may not be equally possible with a single-scale model.

\section{Mathematical model}
We consider for simplicity a single pedestrian in a two-dimensional domain, who walks through a static crowd to reach a given target. The pedestrian is represented by his/her position and velocity $x(t),\,v(t)\in\R^2$, respectively, where $t\geq 0$ is time. The distribution of the static individuals is instead described by a Radon positive measure $\mu$ carrying a total mass $\mu(\R^2)=N$, i.e., the number of static individuals.

The dynamics of the walker are ruled by a social-force-type model~\cite{helbing1995PRE}:
\begin{subequations}
\begin{align}
	& \dot{x}(t)=v(t)=g\left(\frac{\abs{w(t)}}{\vmax}\right)w(t) \label{eq:xdot} \\
 	& \dot{w}(t)=\frac{\vd(x(t))-v(t)}{\tau}+\int_{S_R^\alpha(x(t))}K(y-x(t))\,d\mu(y), \label{eq:wdot}
\end{align}
\end{subequations}
where $g(z)=\min\{1,\,\frac{1}{z}\}$ bounds the actual speed $\abs{v(t)}$ by a physiological maximal value $\vmax>0$.

In~\eqref{eq:wdot}, $\vd:\R^2\to\R^2$ is a given \emph{desired velocity} representing the preferred direction of the moving pedestrian to reach his/her destination from his/her current position, and $\tau$ is a relaxation time. The second term at the right-hand side models instead the interactions with the static individuals. In particular, it expresses the tendency to keep a sufficient distance from them for collision avoidance. The \emph{interaction kernel} $K:\R^2\to\R^2$ describes the position-dependent repulsion of the walker from the static individuals within in his/her \emph{sensory region} $S_R^\alpha(x(t))$ (see Fig.~\ref{fig:setup}, bottom-left panel).

\subsection{Modelling perception: the multiscale structure of $\mu$}
The measure $\mu$ is used to describe how the static individuals are perceived by the moving pedestrian, who can interact with them either singularly or group-wise depending on his/her use of space (see Fig.~\ref{fig:setup} top-left panel).

Taking inspiration from~\cite{colombi2015JMB,cristiani2014BOOK}, we assume that a highly localised perception, typical of relaxed conditions, induces a quite accurate use of space, hence individualistic interactions. In this case we choose $\mu$ as an atomic mass measure $\mu=\epsilon:=\sum_{k=1}^{N}\delta_{y_k}$, where $\delta$ is the Dirac delta and $y_k\in\R^2$ is the position of the $k$th static individual. Conversely, a blurred perception, typical of hurried or panicky conditions, induces a rougher assessment of the usable space, hence group-oriented interactions. In this case we choose $\mu$ as a Lebesgue-absolutely continuous measure $\mu=\rho\mathcal{L}^2$, where $\rho:\R^2\to [0,\,+\infty)$ is the density of the static crowd. With a slight abuse of notation, we will denote the measure $\mu$ by the same symbol $\rho$ and we will require that $\rho(\R^2)=\int_{\R^2}\rho(y)\,dy=N$.

The moving pedestrian can also change his/her type of perception while walking, for instance according to local characteristics of the static crowd. We model this by generalising $\mu$ as
\begin{equation*}
	\mu_t=\theta(x(t))\epsilon+\bigl(1-\theta(x(t))\bigr)\rho,
	\label{eq:mu_hybr}
\end{equation*}
where $\theta:\R^2\to [0,\,1]$ is the \emph{level of perception}. $\theta\searrow 0$ indicates a blurred perception, with the moving pedestrian tending to assess the space occupancy in a continuous way. Conversely, $\theta\nearrow 1$ indicates a localised perception, with the moving pedestrian tending to assess it in a discrete way. Note that the dependence of $\theta$ on $x(t)$ makes the measure $\mu$ time-dependent.


\section{Numerical simulations}
\begin{figure}[t!]
\centering
\includegraphics[width=\textwidth]{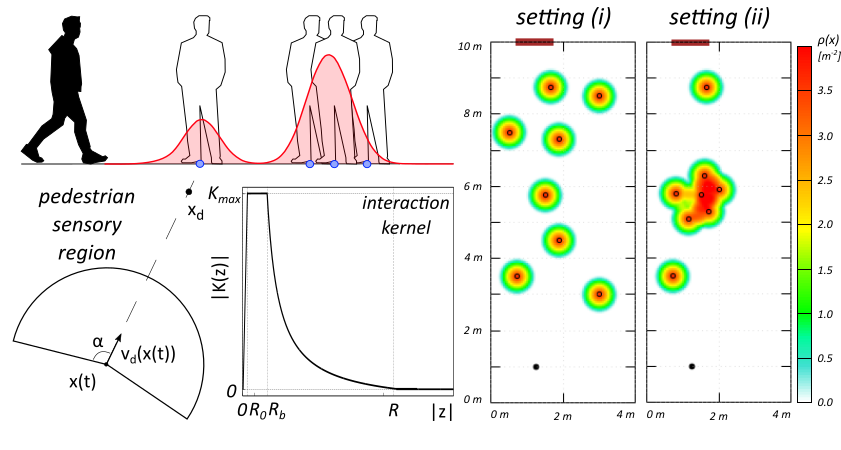}
\caption{Top-left: pictorial representation of the multiscale perception. Bottom-left: sensory region and interaction kernel. Right: specification of settings (i), (ii)}
\label{fig:setup}
\end{figure}

We perform numerical simulations of model~\eqref{eq:xdot}-\eqref{eq:wdot} in a two-dimensional rectangular domain of size $4\unit{m}\times 10\unit{m}$, which is meant to reproduce a corridor or a pavement.

The moving pedestrian, initially in $x(0)=(1.25,\,1)\unit{m}$, wants to reach a $85\unit{cm}$-wide target on the top edge of the domain. In doing this, s/he faces $N=8$ static individuals. We consider two spatial arrangements of the latter:
\begin{inparaenum}[(i)]
\item they are sparse;
\item they form a dense cluster in the central part of the domain,
\end{inparaenum}
see Fig.~\ref{fig:setup} right panels. Given their microscopic positions $\{y_k\}_{k=1}^{N}$, we construct their macroscopic density $\rho$ as the superposition of $N$ unit-mass cones:
\begin{equation}
	\rho(y)=\frac{3}{\pi\sigma^2}\sum_{k=1}^N\left(1-\frac{\abs{y-y_k}}{\sigma}\right)\chi_{B_\sigma(y_k)}(y)
	\label{eq:rho}
\end{equation}
with $\sigma=0.5\unit{m}$, $\chi_{B_{\sigma}(y_k)}$ being the characteristic function of the ball centred in $y_k$ with radius $\sigma$.

\begin{table}[t]
\caption{Summary of the parameters used in the model}
\label{tab:parameters}
\begin{tabular}{lllc}
\hline\hline\noalign{\smallskip}
Parameter & Description & Value & Reference \\
\hline
$\vmax$ & pedestrian maximum speed & $1.34\unit{m/s}$ & \cite{helbing2009ENCYCLOPEDIA} \\
$\tau$ & relaxation time & $0.5\unit{s}$ & \cite{helbing2009ENCYCLOPEDIA} \\
$R$ & sensory radius & $3\unit{m}$ & \cite{helbing2009ENCYCLOPEDIA} \\
$\alpha$ & half visual angle & $100^\circ$ & \cite{helbing2009ENCYCLOPEDIA} \\
$k_0$ & interpersonal repulsion coefficient & $0.3\unit{m^2/s^2}$ & tuned here \\
$R_b$ & average pedestrian body radius & $0.3\unit{m}$ & \cite{seyfried2005JSM,venuti2007CRM} \\
\hline
\end{tabular}
\end{table}

System~\eqref{eq:xdot}-\eqref{eq:wdot} requires the specification of some parameters, see Table~\ref{tab:parameters}. Moreover, we define $\vd(x)=\vmax\frac{x_\d-x(t)}{\abs{x_\d-x(t)}}$, where $x_\d=(1.2,\,10)\unit{m}$ is the centre of the target. We set the sensory region of the moving pedestrian to be the circular sector
$$ S_R^\alpha(x(t))=\left\{y\in\R^2:\abs{y-x(t)}\leq R,\ \frac{(y-x(t))\cdot\vd(x(t))}{\vmax\abs{y-x(t)}}\geq\cos{\alpha}\right\}, $$
where $R$ is the interaction radius and $\alpha$ is the half visual angle. This circular sector is oriented in such a way that the gaze direction of the moving pedestrian is aligned with $\vd$, thus with the target (cf. Fig.~\ref{fig:setup}, bottom-left panel). Finally, we take the interaction kernel as a classical distance-decaying function:
$$ K(r)=
	\begin{cases}
		-k_0\left(\frac{1}{R_b}-\frac{1}{R}\right)\frac{r}{\abs{r}} & \text{if } 0\leq\abs{r}\leq R_b \\
		-k_0\left(\frac{1}{\abs{r}}-\frac{1}{R}\right)\frac{r}{\abs{r}} & \text{if } R_b<\abs{r}\leq R \\
		0 & \text{otherwise}
	\end{cases}
$$
(cf. Fig.~\ref{fig:setup} bottom-left panel), where $R_b<R$ is the body size of an average individual and $k_0>0$ is a proportionality coefficient.

We now perform numerical tests to see how different types of perception give rise to different migratory paths of the moving pedestrian. We consider either a fully localised perception, given by $\theta\equiv 1$, which corresponds to the genuinely microscopic social-force-type model, or a hybrid one. In this latter case, we assume that the walker has a localised perception when the static individuals in $S_R^\alpha(x(t))$ are sparse enough. On the contrary, when they are more densely packed s/he perceives them as an undifferentiated group. The discriminating quantity is the mean distance $\ell$ among the static individuals within the sensory region:
\begin{equation}
	\ell=\ell(x(t))=\frac{1}{n(n-1)}\sum_{y_h,y_k\in S_R^\alpha(x(t))}\abs{y_h-y_k},
	\label{eq:ell}
\end{equation}
where $n=\#\{y_k\in S_R^\alpha(x(t)),\ k=1,\,\dots,\,N\}$ is their number. Then we set:
\begin{equation}
	\theta=\theta\left(\frac{\ell}{\ell^\ast}\right)=
		\begin{cases}
			0 & \text{if } 0\leq\frac{\ell}{\ell^\ast}\leq 1 \\
			\frac{\ell}{\ell^\ast}-1 & \text{if } 1<\frac{\ell}{\ell^\ast}\leq 2 \\
			1 & \text{if } \frac{\ell}{\ell^\ast}>2
		\end{cases}
	\label{eq:theta}
\end{equation}
where $\ell^\ast=1\unit{m}$ is a reference value. Actually,~\eqref{eq:ell} is valid only if $n\geq 2$. If instead $n=0,\,1$ we invariably set $\theta=1$.

\begin{figure}[!t]
\centering
\includegraphics[width=\textwidth]{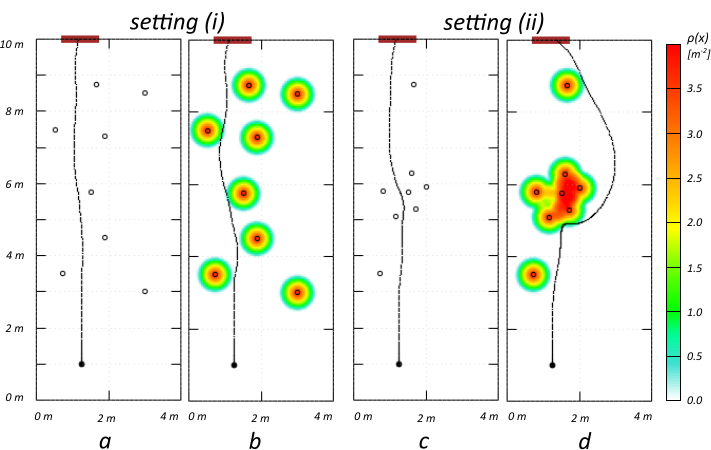}
\caption{Paths followed by the moving pedestrian in the two simulation settings}
\label{fig:paths}
\end{figure}

As shown in Fig.~\ref{fig:paths}a,~c, in both settings (i) and (ii) the fully localised perception allows the walker to pass in between the static individuals, thereby following an almost straight path towards the target. A hybrid variable perception results instead in different trajectories depending on the crowd distribution. When the static individuals are sufficiently sparse (setting (i), Fig.~\ref{fig:paths}b) the moving pedestrian still perceives them as a set of single elements and s/he uses the free space among them. Conversely, when they are more densely packed (setting (ii), Fig.~\ref{fig:paths}d) the moving pedestrian perceives predominantly their ensemble as a compact distributed mass and s/he circumnavigates the density spot. Note that, instead, the purely microscopic model may not allow one to appreciate substantial differences between the migratory paths in settings (i) and (ii) (cf. Figs.~\ref{fig:paths}a, c).

\section{Analysis of the trajectories}
We now study the dependence of the trajectory $t\mapsto x(t)$ on the perception function $\theta$ and on the multiscale description $(\epsilon,\,\rho)$ of the static crowd. We begin by rewriting~\eqref{eq:wdot} in the compact form $\dot{w}(t)=a[\mu_t](x(t),\,w(t))$, where $a$ stands for the acceleration of the moving pedestrian.
\begin{assumption}
We assume that $a$ is bounded and Lipschitz continuous, i.e., there exist $\amax,\,\Lip{a}>0$ s.t.:
\begin{align*}
	& \abs{a[\mu](x,\,w)}\leq\amax \\
	& \abs{a[\nu](x_2,\,w_2)-a[\mu](x_1,\,w_1)}\leq\Lip{a}\left(\abs{x_2-x_1}+\abs{w_2-w_1}+W_1(\mu,\,\nu)\right)
\end{align*}
for all $x,\,x_1,\,x_2,\,w,\,w_1,\,w_2\in\R^2$ and $\mu,\,\nu\in\cM$.
\label{ass:a}
\end{assumption}

\begin{remark}
Here and henceforth $\cM$ is the cone of positive measures with mass $N$ in $\R^2$. Furthermore, $W_1$ is the first Wasserstein metric in the space of finite positive measures.
\end{remark}

Next we consider any two measures $\mu^1_t,\,\mu^2_t\in\cM$ describing the distribution of the static crowd and we let $(x_1(t),\,w_1(t))$, $(x_2(t),\,w_2(t))$ be the corresponding trajectory-velocity pairs of the moving pedestrian.
\begin{proposition}
Let $x_1(0)=x_2(0)$ and $w_1(0)=w_2(0)$. There exists a constant $C>0$ such that
\begin{equation}
	\abs{x_2(t)-x_1(t)}\leq Ce^{Ct}\int_{0}^{t}W_1(\mu^1_s,\,\mu^2_s)\,ds
	\label{eq:Gronwall_x}
\end{equation}
for all $0\leq t\leq T<+\infty$.
\label{prop:Gronwall_x}
\end{proposition}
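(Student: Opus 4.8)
The plan is to set up a coupled Grönwall estimate for the differences $\Delta x(t):=x_2(t)-x_1(t)$ and $\Delta w(t):=w_2(t)-w_1(t)$, both of which vanish at $t=0$ by hypothesis. The two ingredients will be the Lipschitz continuity of the acceleration (Assumption~\ref{ass:a}) and the Lipschitz continuity of the speed-capping map $w\mapsto g(\abs{w}/\vmax)\,w$ that appears in~\eqref{eq:xdot}.

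First I would observe that $G(w):=g(\abs{w}/\vmax)\,w$ coincides with the metric projection of $w$ onto the closed ball $\overline{B(0,\vmax)}\subset\R^2$: for $\abs{w}\leq\vmax$ one has $g\equiv 1$ and $G(w)=w$, whereas for $\abs{w}>\vmax$ one has $G(w)=\vmax\,w/\abs{w}$, which is the nearest point of the ball to $w$. Since the projection onto a closed convex set is nonexpansive, $\abs{G(w_2)-G(w_1)}\leq\abs{w_2-w_1}$. Subtracting the two copies of~\eqref{eq:xdot} and integrating from $0$ (using $\Delta x(0)=0$) then gives
\begin{equation*}
	\abs{\Delta x(t)}\leq\int_0^t\abs{G(w_2(s))-G(w_1(s))}\,ds\leq\int_0^t\abs{\Delta w(s)}\,ds.
\end{equation*}

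Next I would treat the velocity equation. Subtracting the two copies of $\dot{w}=a[\mu_t](x,w)$ and invoking the Lipschitz bound of Assumption~\ref{ass:a} with the two distinct measures yields, after integration (using $\Delta w(0)=0$),
\begin{equation*}
	\abs{\Delta w(t)}\leq\Lip{a}\int_0^t\bigl(\abs{\Delta x(s)}+\abs{\Delta w(s)}+W_1(\mu^1_s,\mu^2_s)\bigr)\,ds.
\end{equation*}
Adding the two displayed inequalities and setting $u(t):=\abs{\Delta x(t)}+\abs{\Delta w(t)}$, the bracketed combination collapses into $u(s)$ plus the Wasserstein term, so that
\begin{equation*}
	u(t)\leq(1+\Lip{a})\int_0^t u(s)\,ds+\Lip{a}\int_0^t W_1(\mu^1_s,\mu^2_s)\,ds.
\end{equation*}
The forcing term $t\mapsto\Lip{a}\int_0^t W_1(\mu^1_s,\mu^2_s)\,ds$ is nondecreasing, so the integral form of Grönwall's lemma applies and produces $u(t)\leq\Lip{a}\,e^{(1+\Lip{a})t}\int_0^t W_1(\mu^1_s,\mu^2_s)\,ds$. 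Since $\abs{\Delta x(t)}\leq u(t)$, the choice $C=1+\Lip{a}$ delivers exactly~\eqref{eq:Gronwall_x}.

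The only genuinely non-routine point is the first step: recognising that the nonlinearity $g$ makes $G$ a nonexpansive ($1$-Lipschitz) map rather than merely bounded, which is what prevents the $\Delta x$-contribution from spoiling the Grönwall constant. Everything else is the standard two-variable Grönwall bookkeeping, and the $W_1$-term simply rides through as an inhomogeneity precisely because Assumption~\ref{ass:a} already packages the measure dependence in the correct metric.
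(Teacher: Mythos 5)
Your proof is correct, and it differs from the paper's at one substantive point. Both arguments are coupled Gronwall estimates on $\abs{x_2-x_1}$ and $\abs{w_2-w_1}$, and your bound on the velocity difference via Assumption~\ref{ass:a} is the paper's opening step (the paper then immediately applies Gronwall to it, whereas you defer all of Gronwall to the end). The real divergence is in the position equation. The paper handles $g\left(\frac{\abs{w_2}}{\vmax}\right)w_2-g\left(\frac{\abs{w_1}}{\vmax}\right)w_1$ by combining the Lipschitz continuity of $g$ with the a priori growth bound on $\abs{w(t)}$ coming from $\abs{a}\leq\amax$; this produces the factor $\frac{\amax\Lip{g}t}{\vmax}$ and ultimately a constant $C=\alpha(T)$ depending on $T$, $\amax$, $\Lip{g}$, $\vmax$ and $\Lip{a}$. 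You instead observe that $w\mapsto g\left(\frac{\abs{w}}{\vmax}\right)w$ is the metric projection onto the closed ball $\overline{B(0,\vmax)}$, hence nonexpansive, so that $\abs{x_2(t)-x_1(t)}\leq\int_0^t\abs{w_2(s)-w_1(s)}\,ds$ with constant $1$. This is a genuinely sharper observation: it makes the boundedness half of Assumption~\ref{ass:a} unnecessary for this proposition, lets you run a single Gronwall inequality on the sum $u=\abs{x_2-x_1}+\abs{w_2-w_1}$ instead of the paper's two nested applications, and yields the explicit, $T$-independent constant $C=1+\Lip{a}$. All the individual steps check out: the projection identity for $g(z)=\min\{1,\,1/z\}$ is exact, the integral form of Gronwall's lemma with a nondecreasing forcing term is applied correctly, and $\abs{x_2(t)-x_1(t)}\leq u(t)$ closes the argument.
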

\begin{proof}
Integrating the acceleration in time in the two cases and taking the difference gives 
$$ \abs{w_2(t)-w_1(t)}\leq\int_0^t\abs{a[\mu^2_s](x_2(s),\,w_2(s))-a[\mu^1_s](x_1(s),\,w_1(s))}\,ds, $$
whence, by Assumption~\ref{ass:a} and Gronwall's inequality,
\begin{equation}
	\abs{w_2(t)-w_1(t)}\leq\Lip{a}e^{\Lip{a}t}\int_0^t\left(\abs{x_2(s)-x_1(s)}+W_1(\mu^1_s,\,\mu^2_s)\right)\,ds.
	\label{eq:Gronwall_w}
\end{equation}
Now, integrating~\eqref{eq:xdot} in time and using the boundedness and Lipschitz continuity of $g$ we obtain
\begin{multline*}
	\abs{x_2(t)-x_1(t)}\leq\frac{\amax\Lip{g}t}{\vmax}\int_0^t\abs{w_2(s)-w_1(s)}\,ds \\
		+\Lip{a}t\int_0^t\left(\abs{x_2(s)-x_1(s)}+\abs{w_2(s)-w_1(s)}+W_1(\mu^1_s,\,\mu^2_s)\right)\,ds,
\end{multline*}
which, invoking~\eqref{eq:Gronwall_w}, after standard manipulations produces
$$ \abs{x_2(t)-x_1(t)}\leq\alpha(t)\left(\int_0^t\abs{x_2(s)-x_1(s)}\,ds+\int_0^t W_1(\mu^1_s,\,\mu^2_s)\,ds\right), $$
with $\alpha(t)=\Lip{a}t\left[1+\left(\frac{\amax\Lip{g}}{\vmax}+\Lip{a}\right)te^{\Lip{a}t}\right]$. Since $\alpha(t)$ is non-decre\-asing, we set $C=\alpha(T)\geq\alpha(t)$ and by Gronwall's inequality we get the thesis.
\end{proof}

It is not difficult to check that slightly regularised versions of both $\vd$, cf.~\cite{cristiani2014BOOK}, and the acceleration in~\eqref{eq:wdot} satisfy Assumption~\ref{ass:a}. In particular, we propose
$$ a[\mu](x,\,w)=\frac{\vd(x)-g\left(\frac{\abs{w}}{\vmax}\right)w}{\tau}+\int_{\R^2}K(y-x)\eta_{S_R^\alpha(x)}(y)\,d\mu(y), $$
where $\eta_{S_R^\alpha(x)}:\R^2\to [0,\,1]$ is a mollification of the characteristic function of the set $S_R^\alpha(x)$. To see that Assumption~\ref{ass:a} is satisfied, use the boundedness and Lipschitz continuity of $\vd$ and $g$ and the results contained in~\cite{tosin2011NHM}.

Thanks to Proposition~\ref{prop:Gronwall_x} we are now in a position to prove
\begin{theorem}
Let $\theta_1,\,\theta_2:\R^2\to [0,\,1]$ be Lipschitz continuous and $\mu^i_t=\theta_i(x_i(t))\epsilon+\bigl(1-\theta_i(x_i(t))\bigr)\rho$, $i=1,\,2$, the corresponding multiscale measures. There exists $C>0$, which depends on $e^{\min\{\Lip{\theta_1},\,\Lip{\theta_2}\}W_1(\epsilon,\,\rho)}$, such that
$$ \sup_{t\in [0,\,T]}\abs{x_2(t)-x_1(t)}\leq CW_1(\epsilon,\,\rho)\norm{\theta_2-\theta_1}{\infty}. $$
\label{theo:trajectories}
\end{theorem}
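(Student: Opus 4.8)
The plan is to feed the Gronwall estimate of Proposition~\ref{prop:Gronwall_x} with an explicit bound on the Wasserstein distance $W_1(\mu^1_s,\,\mu^2_s)$ between the two multiscale measures, and then to close the argument with a second Gronwall inequality. First I would apply Proposition~\ref{prop:Gronwall_x} verbatim, which gives
$$ \abs{x_2(t)-x_1(t)}\leq Ce^{Ct}\int_0^t W_1(\mu^1_s,\,\mu^2_s)\,ds $$
for a constant $C>0$ depending only on $T$ and on the data in Assumption~\ref{ass:a}. Everything then reduces to estimating the integrand.

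The decisive step is to observe that the two measures differ only by a scalar multiple of one and the same signed measure. Indeed, writing $a_i(s)=\theta_i(x_i(s))$ one has
$$ \mu^2_s-\mu^1_s=\bigl(a_2(s)-a_1(s)\bigr)(\epsilon-\rho), $$
and, since $\epsilon$ and $\rho$ share the same mass $N$, the Kantorovich--Rubinstein duality $W_1(\mu,\,\nu)=\sup_{\Lip{\phi}\leq 1}\int\phi\,d(\mu-\nu)$ is available. Factoring the scalar out of the supremum (and using that $\pm\phi$ have the same Lipschitz constant to dispose of its sign) yields the exact identity
$$ W_1(\mu^1_s,\,\mu^2_s)=\abs{a_2(s)-a_1(s)}\,W_1(\epsilon,\,\rho), $$
which is where the factor $W_1(\epsilon,\,\rho)$ in the thesis originates. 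Next I would control the scalar by a triangle inequality, inserting $\theta_1(x_2(s))$ or $\theta_2(x_1(s))$ according to whichever Lipschitz constant is smaller; this produces
$$ \abs{a_2(s)-a_1(s)}\leq\min\{\Lip{\theta_1},\,\Lip{\theta_2}\}\abs{x_2(s)-x_1(s)}+\norm{\theta_2-\theta_1}{\infty}, $$
which is precisely the decomposition responsible for the $\min\{\Lip{\theta_1},\,\Lip{\theta_2}\}$ appearing in the statement.

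Finally I would insert these two bounds back into the estimate of Proposition~\ref{prop:Gronwall_x}. Setting $L=\min\{\Lip{\theta_1},\,\Lip{\theta_2}\}$ and $W=W_1(\epsilon,\,\rho)$ and abbreviating $f(t)=\abs{x_2(t)-x_1(t)}$, one arrives at an inequality of the form
$$ f(t)\leq Ce^{CT}WT\norm{\theta_2-\theta_1}{\infty}+Ce^{CT}WL\int_0^t f(s)\,ds, $$
to which Gronwall's lemma applies once more, giving $f(t)\leq Ce^{CT}WT\norm{\theta_2-\theta_1}{\infty}\,e^{Ce^{CT}WLT}$. Taking the supremum over $t\in[0,\,T]$ and collecting all the $T$- and data-dependent factors into a single constant yields the claim, with the exponential $e^{Ce^{CT}WLT}=(e^{LW})^{Ce^{CT}T}$ accounting for the asserted dependence of the constant on $e^{\min\{\Lip{\theta_1},\,\Lip{\theta_2}\}W_1(\epsilon,\,\rho)}$. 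The only genuinely non-routine point is the Wasserstein identity above; everything else is a double application of Gronwall together with Lipschitz bookkeeping. Accordingly, I expect the main obstacle to be making sure the duality argument is legitimately applied to the mass-$N$ (non-probability) measures $\epsilon$ and $\rho$, and that the scalar factor is pulled out with the correct absolute value.
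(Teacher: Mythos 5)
Your proposal is correct and follows essentially the same route as the paper: the key step in both is the Kantorovich--Rubinstein duality applied to $\mu^2_s-\mu^1_s=\bigl(\theta_2(x_2(s))-\theta_1(x_1(s))\bigr)(\epsilon-\rho)$, followed by the triangle inequality yielding the $\min\{\Lip{\theta_1},\,\Lip{\theta_2}\}$ term, and a final application of Gronwall to the estimate of Proposition~\ref{prop:Gronwall_x}. The only cosmetic difference is that you state the Wasserstein bound as an exact identity before estimating the scalar, while the paper records it directly as an inequality; your explicit final Gronwall computation also makes the claimed dependence of $C$ on $e^{\min\{\Lip{\theta_1},\,\Lip{\theta_2}\}W_1(\epsilon,\,\rho)}$ more transparent than the paper's ``the thesis follows''.
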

\begin{proof}
Let $\varphi:\R^2\to\R$ be any Lipschitz continuous function with $\Lip{\varphi}\leq 1$, then
$$ \abs{\int_{\R^2}\varphi(y)\,d(\mu^2_t-\mu^1_t)(y)}=\abs{\theta_2(x_2(t))-\theta_1(x_1(t))}\cdot\abs{\int_{\R^2}\varphi(y)\,d(\epsilon-\rho)(y)}. $$
Taking the supremum of both sides over $\varphi$ yields
\begin{align*}
	W_1(\mu^1_t,\,\mu^2_t) &\leq \bigl(\abs{\theta_2(x_2(t))-\theta_1(x_2(t))}+\abs{\theta_1(x_2(t))-\theta_1(x_1(t))}\bigr)W_1(\epsilon,\,\rho) \\
	&\leq \bigl(\norm{\theta_2-\theta_1}{\infty}+\Lip{\theta_1}\abs{x_2(t)-x_1(t)}\bigr)W_1(\epsilon,\,\rho).
\end{align*}
An analogous result is obtained by adding and subtracting $\theta_2(x_1(t))$, but this gives $\Lip{\theta_2}$ before the second term at the right-hand side. Thus finally:
$$ W_1(\mu^1_t,\,\mu^2_t)\leq\bigl(\norm{\theta_2-\theta_1}{\infty}+\min\{\Lip{\theta_1},\,\Lip{\theta_2}\}\abs{x_2(t)-x_1(t)}\bigr)W_1(\epsilon,\,\rho) $$
and the thesis follows by plugging this in~\eqref{eq:Gronwall_x} and invoking Gronwall's inequality.
\end{proof}

Theorem~\ref{theo:trajectories} supports the numerical findings of the previous section. In both settings (i) and (ii) the purely microscopic model corresponds to $\theta_1\equiv 1$, while the hybrid model corresponds to $\theta_2=\theta_2(x(t))$ as indicated in~\eqref{eq:ell}-\eqref{eq:theta}. In general, $\norm{\theta_2-\theta_1}{\infty}=1$ as soon as $\theta_2(x)=0$ for some $x\in\R^2$, hence the relationship between the trajectories $t\mapsto x_1(t),\,x_2(t)$ depends strongly on the multiscale description of the static crowd. In setting (i) the microscopic and macroscopic distributions of the static crowd are similar, because the crowd is sparse. Consequently $W_1(\epsilon,\,\rho)$ is small and Theorem~\ref{theo:trajectories} implies that no relevant differences can be observed in the trajectories of the moving pedestrian. Conversely, in setting (ii) the two distributions of the static crowd are quite different because of the density spot. Therefore $W_1(\epsilon,\,\rho)$ is large and Theorem~\ref{theo:trajectories} admits possibly different trajectories of the moving pedestrian.

\section{Conclusions}
We have proposed a mathematical model for pedestrian movement which implements the idea of human perception as a multiscale process. In more detail, it takes into account the fact that the way in which a walker perceives and reacts to the presence of other nearby individuals changes according to various environmental factors, among which we have considered especially his/her use of space. We have modelled the perception and the consequent use of space by means of a dual micro/macroscopic representation of the nearby individuals. Our numerical and analytical results show that different types of perception greatly impact on the actual migratory paths of the walkers, which may not be reproduced by models at a single scale.


\bibliographystyle{amsplain}
\bibliography{CaSmTa-perception_crowd}

\end{document}